\newtheorem{cor}{Corollary}
\newtheorem{prop}{Proposition}
\newtheorem{defn}{Definition}
\def\BibTeX{{\rm B\kern-.05em{\sc i\kern-.025em b}\kern-.08em
    T\kern-.1667em\lower.7ex\hbox{E}\kern-.125emX}}
\begin{document}
\newtheorem{Thm}{\textbf{Theorem}}
\newtheorem{Lem}{\textbf{Lemma}}
\newtheorem{Def}{\textbf{Definition}}
\newtheorem{Rem}{\textbf{Remark}}
\newtheorem{Exam}{\textbf{Example}}
\newtheorem{Sup}{\textbf{Assumption}}
\newtheorem{Cor}{\textbf{Collary}}

\title{State Variation Mining: On Information Divergence with Message Importance in Big Data\\
%{\footnotesize \textsuperscript{*}Note: Sub-titles are not captured in Xplore and
%should not be used}
\thanks{
%The China Major State Basic Research Development Program (973 Program) No.2012CB316100(2),
%and
We indeed appreciate the support of the National Natural Science Foundation of China (NSFC) No. 61771283.}
}

\author{\IEEEauthorblockN{1\textsuperscript{st} Rui She}
\IEEEauthorblockA{\textit{Department of Electronic Engineering} \\
\textit{Tsinghua University}\\
Beijing, P.R. China \\
sher15@mails.tsinghua.edu.cn}
\and
\IEEEauthorblockN{2\textsuperscript{nd} Shanyun Liu}
\IEEEauthorblockA{\textit{Department of Electronic Engineering} \\
\textit{Tsinghua University}\\
Beijing, P.R. China \\
liushany16@mails.tsinghua.edu.cn}
\and
\IEEEauthorblockN{3\textsuperscript{rd} Pingyi Fan}
\IEEEauthorblockA{\textit{Department of Electronic Engineering} \\
\textit{Tsinghua University}\\
Beijing, P.R. China \\
fpy@tsinghua.edu.cn}
}

\maketitle

\begin{abstract}
Information transfer which reveals the state variation of variables usually plays a vital role in big data analytics and processing. %as well as the principal information recognition.
In fact, the measures for information transfer could reflect the system change by use of the variable distributions,
similar to KL divergence and Renyi divergence.
Furthermore, in terms of the information transfer in big data, small probability events usually dominate the importance of the total message to some degree.
Therefore, it is significant to design an information transfer measure based on the message importance which emphasizes the small probability events.
In this paper, we propose a message importance transfer measure (MITM) and
investigate its characteristics and applications on three aspects.
%discuss its characteristics on information correlation.
First, the message importance transfer capacity based on MITM is presented to offer an upper bound for the information transfer process with disturbance.
Then,
we extend the MITM to the continuous case and
discuss the robustness by using it to measuring information distance.
Finally,
we utilize the MITM to guide the queue length selection
%which is the fundamental problem considered to have higher social or academic value
in the caching operation of mobile edge computing.
\end{abstract}

\begin{IEEEkeywords}
information transfer measure, message importance measure, big data analysis, mobile edge computing (MEC), queue theory
\end{IEEEkeywords}

\section{Introduction}
Recently, the amount of data is exploding rapidly and the computing complexity for data processing is also increasing.
To some degree, this phenomenon is resulted from more and more mobile devices as well as the growing service of clouds.
In the literature, it is favored to process the collected data to dig out the hidden important information.
On one hand, it is necessary to improve the computation platforms for big data processing, such as cloud computing, fog computing and mobile edge computing (MEC).
On the other hand, a series of algorithmic technologies for big data analysis and mining are required, such as neural networks and machine learning, as well as distributed parallel computing, etc.

In many scenarios of big data, the small probability events attract more attention than the large probability ones. % \cite{Linear-complexity-exponentially-consistent}.
That is, the rarity of small probability events has higher value in use.
For instance, on anti-terrorist activities, there are only a few illegal people and hazardous agent that should be supervised especially \cite{Counterterrorism-systems}.
Moreover, in terms of the synthetic ID detection, it just focuses on a small number of artificial identities for financial frauds \cite{A-comprehensive-survey}.
Actually, how to mine and characterize small probability events becomes more challenging and more significant in modern life.

%According to rate-distortion theory,
From the perspective of information theory, small probability events detection can be regarded as a kind of clustering problem. %\cite{Clustering-needles}. %,An-information-theoretic-approach-to}.
%By using popular clustering principles (e.g., minimum within-cluster distance, maximum inter-cluster distance, and minimum compressing distortion), some efficient clustering approaches were proposed to detect small probability events.
In particular, a graph-based rare category detection was presented based on the global similarity matrix \cite{Graph-based-rare}.
Furthermore, a time-flexible rare category detection was also designed by resorting to the time-evolving of graphs \cite{Rare-category-detection}.
In spite of these efficient algorithms for some special applications, it is worth noting that  they were designed by traditional information measures and theory, which originate from the viewpoint of typical events, namely the large probability events.

\subsection{Review of Message Importance Measure}
As two fundamental measures in information theory, Shannon entropy and Renyi entropy play a crucial role in many applications including communication engineering, estimation theory, hypothesis testing and pattern recognition.
However, they are not suitable enough for small probability events mining in the big data scenarios.
To do this, the message importance measure (MIM), a new information measure, is proposed to reflect the significance of small probability events.
Thus, let us review the definition of MIM briefly first \cite{message-importance-measure-and-its-application-to-minority-subset-detection-in-big-data}.\par
%set a appropriate parameter
In a finite alphabet, for a given probability distribution $P=\{ p(x_1), p(x_2),..., p(x_n)\}$, the MIM with importance coefficient $\varpi \geq 0$ is defined as
\begin{equation}
L(P,\varpi) = \log\big\{ \sum\limits_{x_i} p(x_i)e^{\varpi\left(1-p(x_i)\right)} \big\},
\end{equation}
which measures the information importance of the distribution.
Then, by setting the parameter $\varpi=1$ and simplifying the form of MIM, it is easy to obtain its fundamental definition as follows.

\begin{defn}\label{defn:MIM}
For the discrete probability $P$=$\{p(x_1)$, $p(x_2)$, ...,$ p(x_n)\}$, the MIM can be given by
\begin{equation}\label{MIM_discrete1}
 \begin{aligned}
    L(P)
    & = \sum\limits_{x_i} p(x_i) e^{-p(x_i)}.\\
 \end{aligned}
\end{equation}
\end{defn}
Comparing with Shannon entropy and Renyi entropy, the MIM replaces the corresponding logarithm operator or polynomial operator with the exponential form so that the weight factors of small probability elements can be amplified much more.
This can help to reflect the significance of small probability events from the viewpoint of information measure.

In addition, as a kind of evaluation criteria, Fadeev's postulates are commonly used to describe the information measures including Shannon entropy and Renyi entropy \cite{On-measures-of-entropy-and-information}.
In this case, for two independent random distributions $P$ and $Q$, Renyi entropy has a weaker postulate than Shannon entropy, that is
\begin{equation}
    H(PQ) = H(P) + H(Q),
\end{equation}
where the function $H(\cdot)$ denotes the corresponding information measure.
Similarly, the MIM has a much weaker postulate than Renyi entropy, as follows
\begin{equation}
    H(PQ) \le H(P) + H(Q).
\end{equation}
Therefore, in the sense of generalized Fadeev's postulates, the MIM can be reasonably viewed as a kind of information measure similar to Shannon entropy and Renyi entropy.

\subsection{Message Importance Transfer Measure}
%For information transfer systems, we can consider such a simplified system model shown in Fig. \ref{fig_transfer_system}.
For an information transfer process, we consider such a model that all the $P$ and $Q$ satisfies the Lipschitz condition as follows,
\begin{equation}\label{eq.Lipschitz}
    |H(P)- H(Q)| \le \lambda\|P-Q\|_{1},
\end{equation}
where $P$ and $Q$ denote the original probability distribution and the final one respectively in the information transfer process; $\lambda>0$ is the Lipschitz constant; $H(\cdot)$ denotes a kind of information measure function; $\| \cdot \|_{1}$ denotes the $l_1$-norm measure.

Here, we shall investigate and measure information transfer process by use of the message importance.
%for the above system.
Actually, how to characterize the message importance variation in the processing of big data is a critical and interesting problem.
On account of Definition \ref{defn:MIM},
it is available to regard the MIM as an element to measure the message importance variation for a dynamic system.
Then, a new information transfer measure based on the MIM is defined as follows.
\begin{defn}\label{defn:MID}
For two discrete probability $Q=\{q(x_1), q(x_2),$ $... , q(x_n)\}$ and $P=\{p(x_1), p(x_2), ... , p(x_n)\}$ satisfying the constraint in Eq. (\ref{eq.Lipschitz}), the message importance transfer measure (MITM) is defined as
\begin{equation}\label{MID_discrete}
 \begin{aligned}
    & D_{I}(Q||P)
    = \sum\limits_{x_i} \{ q(x_i) e^{-q(x_i)} -p(x_i) e^{-p(x_i)} \}.\\
 \end{aligned}
\end{equation}
\end{defn}
Note that the Definition \ref{defn:MID} characterizes the information transfer from the statistics.
That is, we can make use of MITM to measure the change of message importance focusing on small probability events in an information transfer process.
%In particular, we can make use of MITM to measure the end-to-end change of message importance focusing on small probability events for a system model shown in Fig. \ref{fig_transfer_system}.

Actually, there exist a variety of different information measures handling the problem of information transfer process.
Shannon entropy and Renyi entropy are applicable to intrinsic dimension estimation \cite{On-local-intrinsic-dimension-estimation}. As well, the NMIM can be used in anomaly detection \cite{Non-parametric-Message-Important-Measure}.
Moreover, the directed information %\cite{Universal-directed-information}
and Schreiber’s transfer entropy \cite{Measuring-information-transfer} are commonly applied to inferring the causality structure and characterizing the information transfer process.
In addition, referring to the idea from dynamical system theory, new information transfer measures are proposed to explore and exploit the causality between states in the system control \cite{Causality-preserving-information-transfer-measure}.

However, in spite of numerous kinds of information measures, few works focus on how to characterize the information transfer from the perspective of message importance in big data.
To this end, the MITM different from the above information measures is introduced.

\subsection{Organization}
We organize the rest of this paper as follows.
In Section II,
we introduce the message importance transfer capacity measured by the MITM to describe the information transfer with disturbance.
In Section III,
we extend the MITM to the continuous case to investigate the variation of message importance in the information transfer process.
In Section IV,
the MITM is used to discuss the queue length selection for the data caching in MEC from the viewpoint of queue theory.
Moreover, some simulations are presented to validate our theoretical results.
Finally, we conclude it in Section VI.
%
%\hfill August 26, 2015%

\section{Message Importance Transfer Capacity Based on Message Importance Transfer Measure}
In this section, we will introduce the MITM to characterize the information transfer process %with additive disturbance, which can be
shown in Fig. \ref{fig_transfer_system}. To do so, we define the message importance transfer capacity measured by the MITM as follows.
%Moreover, the capacity is adopted to characterize the information transfer in communication with bandlimited channel. The definition of the message importance transfer capacity is presented
%as follows.
\begin{figure}[!t]
\centering
\includegraphics[width=3.6in]{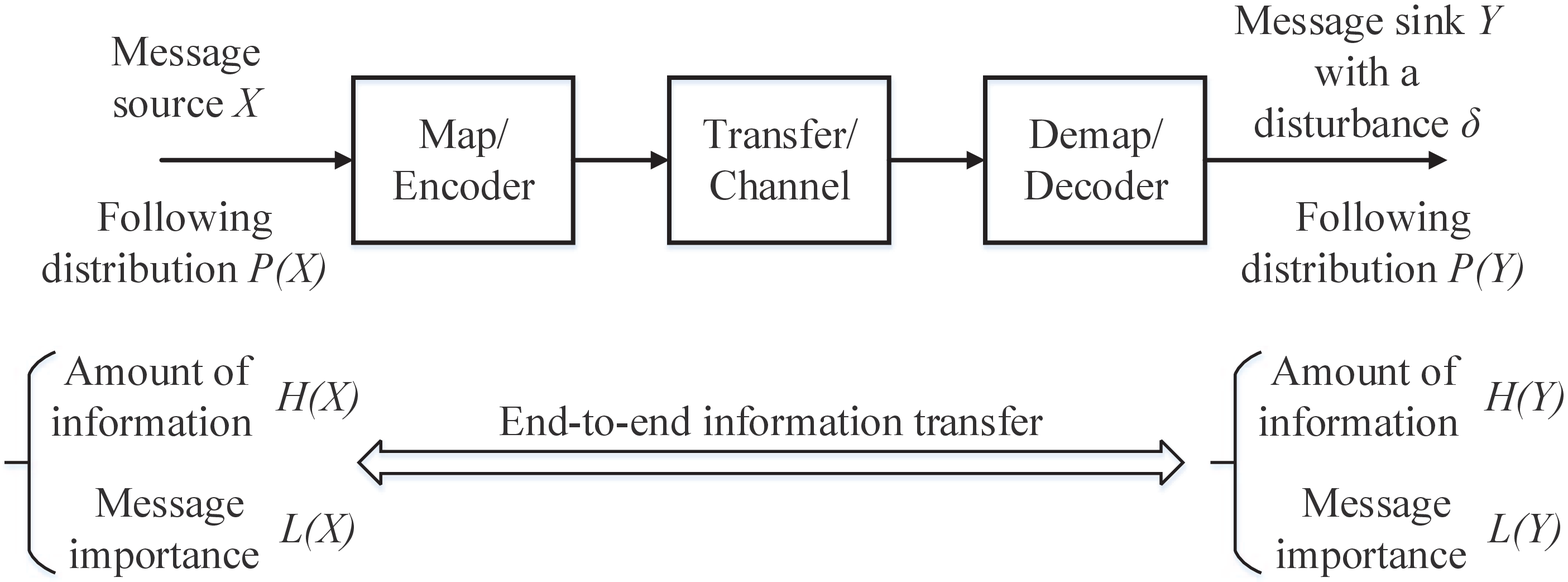}
\caption{Information transfer system model.}
\label{fig_transfer_system}
\end{figure}

\begin{defn}\label{defn:C_D}
Assume that there exists
an information transfer process (from the variable $X$ to $Y$) as,
%channel as
\begin{equation}\label{eq.relation_1}
 \begin{aligned}
    \big\{ X, p(y-\delta_0|x), Y-\delta_0 \big| \delta_0 \in \{ \delta | \delta \sim p(\delta)\} \big\},
 \end{aligned}
\end{equation}
where $\delta$ denotes a disturbance following distribution $p(\delta)$ and $\delta_0$ is a certain element from the support set of $\delta$. In brief, Eq. (\ref{eq.relation_1}) can also be written as
\begin{equation}\label{eq.relation}
 \begin{aligned}
    \big\{ X, p(\tilde y|x), \tilde Y \big\},
 \end{aligned}
\end{equation}
where $\tilde y=y-\delta_0$ and $\tilde Y=Y-\delta_0$. Furthermore, $p(\tilde y|x)$ denotes a probability distribution matrix describing the information transfer from the variable $X$ following the distribution $p(x)$ to $\tilde Y$ following the distribution $p(\tilde y)$.
We define the message importance transfer capacity as
\begin{equation}
   \begin{aligned}\label{eq.D_channel_average}
    & C = \sum\limits_{\delta_0 \in \{\delta | \delta \sim p(\delta) \}} p(\delta_0) \tilde C(\delta_0), \\
   \end{aligned}
\end{equation}
in which
\begin{equation}
   \begin{aligned}\label{eq.D_channel}
    & \tilde C(\delta_0)= \max\limits_{p(x)} \{ L(\tilde Y) - L(\tilde Y|X)\}, \\
   \end{aligned}
\end{equation}
where $p(\tilde y_j) = \sum\limits_{x_i} p(x_i)p(\tilde y_j|x_i)$, $L(\tilde Y)=\sum\limits_{\tilde y_j} p(\tilde y_j)e^{-p(\tilde y_j)}$, $L(\tilde Y|X) = \sum_{\tilde y_j}\sum_{x_i} p(x_i,\tilde y_j)e^{-p(\tilde y_j|x_i)}$ with the constraint
$|L(\tilde Y)- L(\tilde Y|X)| \le \lambda\|p(\tilde y)-p(\tilde y|x)\|_{1}$.
\end{defn}

In order to have an insight into the applications of message importance transfer capacity, some specific information transfer scenarios are discussed as follows.

\subsection{Binary symmetric information transfer}
\begin{prop}\label{prop.symmetric}
Assume that there exists an information transfer process as same as that mentioned in Eq. (\ref{eq.relation_1}) and Eq. (\ref{eq.relation}),
%\begin{equation}
% \begin{aligned}
%    \{ X, p(y|x), Y\},
% \end{aligned}
%\end{equation}
where the disturbance $\delta$ follows a binary uniform distribution (namely $p$($\delta$)= (1/2, 1/2)), and the information transfer matrix is
\begin{equation}
 \begin{aligned}
   p(\tilde y|x) = \left [
   \begin{matrix}
    1-\beta & \beta \\
    \beta & 1-\beta
   \end{matrix}
   \right ],
 \end{aligned}
\end{equation}
which indicates that variables $X$ and $\tilde Y$ both obey the binary distributions.
In this case, the message importance transfer capacity is
\begin{equation}
\begin{aligned}
    %C_{D(Y||X)}
    C(\beta) = e^{ -\frac{1}{2} } - L(\beta),
\end{aligned}
\end{equation}
where $L(\beta)= \beta e^{-\beta} + (1-\beta)e^{-(1-\beta)} $ ($0<\beta<1 $) and
$|C(\beta)| \le \lambda\|p(\tilde y)-p(\tilde y|x)\|_{1}$
($\lambda \ge \frac{e^{ -\frac{1}{2} } - \beta e^{-\beta} + (1-\beta)e^{-(1-\beta)}}{|1-2\beta|} $).
\end{prop}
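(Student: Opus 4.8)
The plan is to use the symmetry of the transfer matrix to collapse the maximization in Eq.~(\ref{eq.D_channel}) to a one-dimensional concave problem. \textbf{First}, observe that the disturbance $\delta_0$ enters only through the relabeling $\tilde Y=Y-\delta_0$, under which both $L(\tilde Y)$ and $L(\tilde Y\,|\,X)$ are unchanged; hence $\tilde C(\delta_0)$ is independent of $\delta_0$, and since $\sum_{\delta_0}p(\delta_0)=1$, Eq.~(\ref{eq.D_channel_average}) gives $C(\beta)=\tilde C(\delta_0)$. \textbf{Second}, write the input as $p(x)=(\alpha,1-\alpha)$. Because every row of $p(\tilde y\,|\,x)$ is a permutation of $(1-\beta,\beta)$, the inner sum $\sum_{\tilde y_j}p(\tilde y_j\,|\,x_i)e^{-p(\tilde y_j\,|\,x_i)}$ equals $\beta e^{-\beta}+(1-\beta)e^{-(1-\beta)}=L(\beta)$ for each $x_i$, so $L(\tilde Y\,|\,X)=\sum_{x_i}p(x_i)L(\beta)=L(\beta)$, i.e.\ the conditional term is constant in $\alpha$.

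\textbf{Third}, the output marginal is $p(\tilde y)=(t,1-t)$ with $t=\beta+\alpha(1-2\beta)$, so $L(\tilde Y)=f(t):=te^{-t}+(1-t)e^{-(1-t)}$, and maximizing over $p(x)$ reduces to maximizing $f$ over the values of $t$ attainable as $\alpha$ ranges over $[0,1]$. Putting $h(u)=ue^{-u}$, one computes $h''(u)=(u-2)e^{-u}<0$ on $[0,1]$, so $h$ is strictly concave there; consequently $f(t)=h(t)+h(1-t)$ is concave on $[0,1]$ and satisfies $f(1-t)=f(t)$, which forces $f(1/2)=f\!\left(\tfrac{t+(1-t)}{2}\right)\ge\tfrac12 f(t)+\tfrac12 f(1-t)=f(t)$, i.e.\ the maximum of $f$ on $[0,1]$ is $f(1/2)=2h(1/2)=e^{-1/2}$. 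Since $t=1/2$ is realized by $\alpha=1/2$, it is attainable, so $\max_{p(x)}L(\tilde Y)=e^{-1/2}$, and combining with the second step, $C(\beta)=e^{-1/2}-L(\beta)$ with $L(\beta)=\beta e^{-\beta}+(1-\beta)e^{-(1-\beta)}$.

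\textbf{Finally}, for the Lipschitz bound I would evaluate the $l_1$ term at the capacity-achieving input $\alpha=1/2$: there $p(\tilde y)=(1/2,1/2)$ while each conditional row $(1-\beta,\beta)$ or $(\beta,1-\beta)$ differs from it by $|1/2-\beta|$ in each coordinate, so $\|p(\tilde y)-p(\tilde y\,|\,x)\|_1=|1-2\beta|$. Because $L(\beta)=f(\beta)\le e^{-1/2}$ from the third step, $C(\beta)\ge 0$; hence the constraint $|C(\beta)|\le\lambda\|p(\tilde y)-p(\tilde y\,|\,x)\|_1$ from Definition~\ref{defn:C_D} holds exactly when $\lambda\ge\big(e^{-1/2}-L(\beta)\big)/|1-2\beta|$, which is the stated condition.

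The step I expect to be the crux is the third one: I must justify that only $L(\tilde Y)$ needs to be optimized (which rests on the exact cancellation in the second step) and that the stationary point $t=1/2$ is a genuine global maximum over the attainable range rather than merely a local one — the concavity-plus-symmetry argument for $f$ is what makes this rigorous and sidesteps a messy derivative sign analysis. A minor additional point of care is fixing the reading of $\|p(\tilde y)-p(\tilde y\,|\,x)\|_1$ so that it evaluates to $|1-2\beta|$ as used in the $\lambda$ bound.
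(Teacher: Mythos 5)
Your proposal is correct, and its skeleton matches the paper's proof: show that the conditional term $L(\tilde Y|X)$ equals $L(\beta)=\beta e^{-\beta}+(1-\beta)e^{-(1-\beta)}$ for every input distribution, then maximize $L(\tilde Y)$ over the input and find that the uniform input attains $e^{-1/2}$. Where you genuinely differ is the justification of that maximization. The paper parametrizes by the input probability $p$, computes $\partial C(p,\beta)/\partial p$, and simply asserts that this derivative is monotonically decreasing on $[0,1]$, so that $p=1/2$ is the unique stationary point and hence the maximizer; the monotonicity claim is not proved. You instead reparametrize by the output marginal $t=\beta+\alpha(1-2\beta)$, establish strict concavity of $h(u)=ue^{-u}$ on $[0,1]$ from $h''(u)=(u-2)e^{-u}<0$, and combine concavity of $f(t)=h(t)+h(1-t)$ with the symmetry $f(t)=f(1-t)$ to conclude that the global maximum is $f(1/2)=e^{-1/2}$, attainable via $\alpha=1/2$ for every $\beta$. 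This supplies exactly the rigor the paper's derivative argument leaves implicit, and it sidesteps any endpoint or sign analysis. You also make explicit two points the paper glosses over: that $\tilde C(\delta_0)$ is independent of $\delta_0$, so the average over the disturbance in Eq.~(\ref{eq.D_channel_average}) collapses to a single value, and that $\|p(\tilde y)-p(\tilde y|x)\|_1$ evaluates to $|1-2\beta|$ at the capacity-achieving input, which together with $C(\beta)\ge 0$ yields the stated condition on $\lambda$ (note the proposition's displayed bound contains a sign slip; the numerator should read $e^{-1/2}-\beta e^{-\beta}-(1-\beta)e^{-(1-\beta)}=e^{-1/2}-L(\beta)$, which is what you use).
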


\begin{proof}
Considering a variable $X$ following the binary distribution $(p, 1-p)$, it is not difficult to see that
\begin{equation}
\begin{aligned}
     L(\tilde Y|X)
%    & = \sum\limits_{i=1}^{2} p(x_i) \sum\limits_{y}p(y|x_i) e^{-p(y|x_i)}\\
%    & = \sum\limits_{y}p(y|x_i) e^{-p(y|x_i)}\\
    & = \beta e^{-\beta} + (1-\beta)e^{-(1-\beta)}.
\end{aligned}
\end{equation}
Moreover, according to Eq. (\ref{eq.D_channel_average}) and Eq. (\ref{eq.D_channel}), we have
message importance transfer capacity as
\begin{equation}
\begin{aligned}
     & C(p, \beta) = \max\limits_{p} \Big\{ [p+ \beta(1-2p)]e^{-[p+ \beta(1-2p)]} \\
     & + [(1-p)+ \beta(2p-1)]e^{-[(1-p)+ \beta(2p-1)]} \Big\} - L(\beta).\\
\end{aligned}
\end{equation}
Then, it is readily seen that
\begin{equation}
\begin{aligned}
     & \frac{\partial C(p, \beta)}{\partial p} = (1-2\beta)\Big\{ [1-p-\beta(1-2p)]e^{-[p+ \beta(1-2p)]} \\
     & - [1-(1-p)-\varepsilon(2p-1)]e^{-[(1-p)+ \beta(2p-1)]} \Big\}. \\
\end{aligned}
\end{equation}

In the light of the monotonically decreasing of $\frac{\partial C(p, \beta)}{\partial p}$ for $p \in [0,1]$, it is apparent that $p=1/2$ is the only solution for $\frac{\partial C(p, \beta)}{\partial p} =0 $. Therefore, %by substituting $p=1/2$ into $C(p, \beta)$,
the proposition can be testified.
\end{proof}

According to Proposition \ref{prop.symmetric}, on one hand, when $\beta=1/2$, that is, the information transfer process is just random, we will gain the lower bound of $C(\beta)$, namely $C(\beta) =0$. On the other hand, when $\beta=0$,
%namely there is a certain information transfer process,
we will have the maximum message importance transfer capacity.

\subsection{Strongly symmetric information transfer}
%In terms of the strongly symmetric information transfer matrix, it can be regarded as an extension of the binary symmetric one.
%The message information transfer capacity of the former is also analogous to the that of the latter, which is discussed as follows.
\begin{cor}
Assume that the information transfer process described by Eq. (\ref{eq.relation_1}) and Eq. (\ref{eq.relation}),
has a strongly symmetric information transfer matrix
\begin{equation}
 \begin{aligned}
   p(\tilde y|x) = \left [
   \begin{matrix}
    1-\beta & \frac{\beta}{K-1} &...& \frac{\beta}{K-1} \\
    \frac{\beta}{K-1} & 1-\beta & ...& \frac{\beta}{K-1} \\
    ...& ... & ... & ...\\
    \frac{\beta}{K-1} &...& \frac{\beta}{K-1} &  1-\beta
   \end{matrix}
   \right ],
 \end{aligned}
\end{equation}
and its disturbance $\delta$ follows an uniform distribution (namely $p$($\delta$)= (1/K,... 1/K)),
which indicates that variables $X$ and $\tilde Y$ both follow $K$-ary distributions.
Then, we have the message importance transfer capacity as
\begin{equation}
\begin{aligned}
    C(\beta) = e^{-\frac{1}{K}}- \{ (1-\beta)e^{- (1-\beta) } + \beta e^{-\frac{\beta}{K-1} } \},
\end{aligned}
\end{equation}
where the parameter $\beta \in (0,1)$
and $|C(\beta)| \le \lambda\|p(\tilde y)-p(\tilde y|x)\|_{1}$
($\lambda \ge \frac{ e^{-{1}/{K}}- (1-\beta)e^{- (1-\beta) } - \beta e^{-{\beta}/{K-1} }}{2|1-\beta-1/K|} $).
\end{cor}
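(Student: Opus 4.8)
The plan is to mirror the proof of Proposition~\ref{prop.symmetric}, replacing its one-dimensional derivative computation with a concavity argument that copes with the $(K-1)$-dimensional input simplex. First observe that the transfer matrix $p(\tilde y|x)$ and hence every quantity in Eq.~(\ref{eq.D_channel}) is independent of the particular disturbance value $\delta_0$; therefore $\tilde C(\delta_0)$ is a constant and Eq.~(\ref{eq.D_channel_average}) collapses to $C=\sum_{\delta_0}p(\delta_0)\tilde C(\delta_0)=\tilde C(\delta_0)$, so it suffices to evaluate a single $\tilde C$. \emph{Step 1 — the conditional term is input-independent.} Writing $p(x_i,\tilde y_j)=p(x_i)p(\tilde y_j|x_i)$ and exploiting the strongly symmetric structure (each row of $p(\tilde y|x)$ has one entry $1-\beta$ and $K-1$ entries $\tfrac{\beta}{K-1}$), I would obtain
\begin{equation}
\begin{aligned}
L(\tilde Y|X) &= \sum_{x_i}p(x_i)\Big[(1-\beta)e^{-(1-\beta)}+(K-1)\tfrac{\beta}{K-1}e^{-\frac{\beta}{K-1}}\Big]\\
&= (1-\beta)e^{-(1-\beta)}+\beta e^{-\frac{\beta}{K-1}},
\end{aligned}
\end{equation}
which does not depend on $p(x)$. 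Hence $\tilde C=\max_{p(x)}L(\tilde Y)-\big[(1-\beta)e^{-(1-\beta)}+\beta e^{-\frac{\beta}{K-1}}\big]$, and the task reduces to maximizing $L(\tilde Y)=\sum_{\tilde y_j}f(p(\tilde y_j))$ with $f(t)=te^{-t}$ over the output distributions reachable through the channel.

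\emph{Step 2 — maximize $L(\tilde Y)$ by concavity.} Since $f''(t)=-(2-t)e^{-t}<0$ for all $t\in[0,1]$, the map $f$ is concave on the probability range, so Jensen's inequality gives $L(\tilde Y)=\sum_{j=1}^{K}f(p(\tilde y_j))\le Kf\big(\tfrac1K\sum_j p(\tilde y_j)\big)=Kf(\tfrac1K)=e^{-1/K}$ for \emph{every} output distribution. Because $p(\tilde y|x)$ is doubly stochastic, the uniform input $p(x_i)=1/K$ induces the uniform output $p(\tilde y_j)=1/K$, which attains this bound; hence $\max_{p(x)}L(\tilde Y)=e^{-1/K}$ and $C(\beta)=e^{-1/K}-\{(1-\beta)e^{-(1-\beta)}+\beta e^{-\frac{\beta}{K-1}}\}$. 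The same concavity applied to one channel row, $\tfrac1K\big[f(1-\beta)+(K-1)f(\tfrac{\beta}{K-1})\big]\le f(\tfrac1K)$, shows in addition that $C(\beta)\ge 0$, which is what lets us drop the absolute value below.

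\emph{Step 3 — the Lipschitz constant.} Evaluated at the optimal (uniform) input, $p(\tilde y)$ is uniform and each row of $p(\tilde y|x)$ has the stated form, so
\begin{equation}
\begin{aligned}
\|p(\tilde y)-p(\tilde y|x)\|_1 &= \big|1-\beta-\tfrac1K\big|+(K-1)\big|\tfrac{\beta}{K-1}-\tfrac1K\big|\\
&= 2\big|1-\beta-\tfrac1K\big|,
\end{aligned}
\end{equation}
where I used $(K-1)\big|\tfrac{\beta}{K-1}-\tfrac1K\big|=\big|\beta-\tfrac{K-1}{K}\big|=\big|1-\beta-\tfrac1K\big|$. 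Substituting this into $|C(\beta)|\le\lambda\|p(\tilde y)-p(\tilde y|x)\|_1$ and using $C(\beta)\ge 0$ yields the asserted range $\lambda\ge\frac{e^{-1/K}-(1-\beta)e^{-(1-\beta)}-\beta e^{-\beta/(K-1)}}{2|1-\beta-1/K|}$.

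The main obstacle is the maximization in Step~2: one must argue that the maximum of the concave functional $L(\tilde Y)$ over the \emph{achievable} output polytope coincides with its unconstrained maximum over the full simplex. This is immediate here only because the channel is doubly stochastic, so the unconstrained maximizer — the uniform distribution — already lies in the achievable set; for a non-symmetric transfer matrix this coincidence fails and the optimization becomes genuinely nontrivial. Everything else (Step~1, the $\ell_1$ computation, and the sign of $C(\beta)$) is routine once the concavity of $t\mapsto te^{-t}$ on $[0,1]$ is in hand.
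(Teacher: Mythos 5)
Your proof is correct and reaches the stated capacity, but it handles the key maximization differently from the paper. The paper's proof computes $L(\tilde Y|X)=(1-\beta)e^{-(1-\beta)}+\beta e^{-\beta/(K-1)}$ exactly as you do, but then simply asserts that, ``similar to Proposition 1, the Lagrange multiplier method'' forces the optimal output distribution to be uniform, and finishes by checking that a uniform input through the strongly symmetric (doubly stochastic) matrix indeed produces a uniform output. You replace the Lagrange-multiplier sketch with a global argument: since $f(t)=te^{-t}$ has $f''(t)=-(2-t)e^{-t}<0$ on $[0,1]$, Jensen's inequality gives $L(\tilde Y)\le Kf(1/K)=e^{-1/K}$ for \emph{every} output law, and double stochasticity shows the bound is attained by the uniform input. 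This buys you genuine global optimality (the paper's stationarity argument alone does not rule out boundary or non-uniform optima), an explicit justification that the unconstrained maximizer lies in the achievable output polytope (a point the paper leaves implicit), and two extras the paper omits entirely: the nonnegativity $C(\beta)\ge 0$ via concavity applied to a single channel row, and the $\ell_1$ computation $\|p(\tilde y)-p(\tilde y|x)\|_1=2|1-\beta-1/K|$ that produces the stated admissible range of $\lambda$. Your observation that $\tilde C(\delta_0)$ does not depend on $\delta_0$, so the average in Eq.~(\ref{eq.D_channel_average}) collapses, is also left tacit in the paper; making it explicit is a small but genuine improvement in completeness.
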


\begin{proof}
This Corollary is an extension of Proposition \ref{prop.symmetric}. First, on account of the information transfer matrix and the Eq. (\ref{MIM_discrete1}), we have
\begin{equation}
\begin{aligned}
     L(\tilde Y|X)
    %& = \sum\limits_{i=1}^{2} p(x_i) \sum\limits_{y_j}p(y_j|x_i) e^{-p(y_j|x_i)}\\
%    & = \sum\limits_{y_j}p(y_j|x_i) e^{-p(y_j|x_i)}\\
    & = \beta e^{-\frac{\beta}{K-1}} + (1-\beta)e^{-(1-\beta)}.
\end{aligned}
\end{equation}

Then, similar to the proof of Proposition \ref{prop.symmetric}, we can also use Lagrange multiplier method to obtain the message information transfer capacity. In this case, the distribution of $\tilde Y$ should satisfy $p(\tilde y_1)=p(\tilde y_2)=...=p(\tilde y_K)=1/K$.

%uniform distribution for $p(x_i)$ which means $p(y_j)$ is also the uniform distribution can contribute to the maximum $C(\beta)$

In addition, consider that the probability distribution of variable $X$ is $\{p(x_1),p(x_2),...,p(x_K) \}$. In the strongly symmetric transfer matrix, if the variable $X$ follows uniform distribution, namely $p(x_1)=p(x_2)=...=p(x_K)=1/K$, we will have
\begin{equation}
\begin{aligned}
    p(\tilde y_j)
    & = \sum\limits_{i=1}^{K} p(x_i, \tilde y_j)= \sum\limits_{i=1}^{K}p(x_i) p( \tilde y_j|x_i)\\
    & = \frac{1}{K} \sum\limits_{i=1}^{K} p(\tilde y_j|x_i)= \frac{1}{K},
\end{aligned}
\end{equation}
which indicates that $\tilde Y$ also follows the uniform distribution.

Therefore, it is testified that when the variable $X$ follows an uniform distribution which leads to the uniform distribution for variable $\tilde Y$, we will obtain the message importance transfer capacity $C(\beta)$.
%What is more, according to the definition of message importance transfer capacity in Eq. (\ref{eq.D_channel}), it is readily seen that
%\begin{equation}
%\begin{aligned}
%      C(\beta)
%     & = \max\limits_{p(x)} \{L(Y)\}
%     - [\beta e^{-\frac{\beta}{K-1}} + (1-\beta)e^{-(1-\beta)}], \\
%\end{aligned}
%\end{equation}
%where $L(Y)=\sum_{y_j} p(y_j)e^{-p(y_j)} $.
%
%Then, by using Lagrange multiplier method, we have
%\begin{equation}
%\begin{aligned}
%    & G(p(y_j), \lambda_0)
%    = \sum\limits_{y_j} p(y_j)e^{-p(y_j)} + \lambda_0 \big[\sum\limits_{y_j} p(y_j)-1 \big].
%\end{aligned}
%\end{equation}
%%As well, the partial derivative of $G(p(y_j), \lambda_0)$ with respect to $p(y_j)$ is
%%\begin{equation}
%%\begin{aligned}
%%    \frac{\partial G(p(y_j),\lambda_0)}{\partial p(y_j)}
%%    = (1-p(y_j))e^{-p(y_j)} + \lambda_0.
%%\end{aligned}
%%\end{equation}
%
%By setting $\frac{\partial G(p(y_j),\lambda_0)}{\partial p(y_j)}=0$ and $\frac{\partial G(p(y_j),\lambda_0)}{\partial \lambda_0}=0$, it can be readily verified that the extreme value of $\sum_{y_j} p(y_j)e^{-p(y_j)}$ is achieved by the solution $p(y_1)=p(y_2)=...=p(y_K)=1/K$.
%
%In light of $\frac{\partial^2 G(p(y_j),\lambda_0)}{\partial p^2(y_j)}<0$ with respect to $p(y_j) \in [0,1]$, it is apparent that the uniform distribution for $p(x_i)$ which means $p(y_j)$ is also the uniform distribution can contribute to the maximum $C(\beta)$.
%Therefore, this proposition is testified.
%%Then, it is easy for us to complicate the proof of the proposition.
\end{proof}

\section{Message Importance Transfer Measure in Continuous Cases}
Similar to the definition \ref{defn:MIM} and \ref{MIM_discrete1}, we can extend the two definition to the case with continuous distributions as follows
%the differential message importance measure (DMIM)
%MIM reference
\begin{equation}\label{MIM_continue1}
 \begin{aligned}
    L(f(x))
    & = \int_{S_x} f(x) e^{-f(x)}dx, \quad \quad x \in S_x,\\
    %& = \sum\limits_{i=0}^{\infty}(-1)^i\int_{S_x}f^{(i+1)}(x) dx,
 \end{aligned}
\end{equation}
%where $f(x)$ is a continuous probability distribution with respect to the variable $X$ in a given interval $S_x$.
%the message importance divergence (MID) is defined as
\begin{equation}\label{MID_continue}
 \begin{aligned}
    D_{I}(g(x)||f(x))
    & = L(g(x))-L(f(x)) \\
    & = \int_{S_x} { g(x) e^{-g(x)}- f(x) e^{-f(x)} }dx , x \in S_x,\\
    %& = \int_{S_x} g(x) e^{-g(x)}dx - \int_{S_x} f(x) e^{-f(x)}dx , x \in S_x.\\
    %& = \sum\limits_{i=0}^{\infty}(-1)^i\int_{S_x}f^{(i+1)}(x) dx,
 \end{aligned}
\end{equation}
where $g(x)$ and $f(x)$ are two probability distributions with respect to the variable $X$ in a given interval $S_x$. Moreover, $L(f(x))$ and $D_{I}(g(x)||f(x))$ can be
regarded as the continuous MIM and MITM.
%called differential message importance measure (DMIM) and differential message importance transfer measure (DMITM).

Then, we investigate the variation of message importance by using
the continuous MITM,
%DMITM,
which can also reflect the robustness of
%DMITM.
continuous MITM.
Consider the observation model, $\mathcal{P}_{g_0|f_0}$: $f_0(x) \to g_0(x)$, that denotes an information transfer map for the variable $X$ from the probability distribution $f_0(x)$ to $g_0(x)$.
%In fact, it is not easy to cope with two general distributions. Instead,
By using the similar way in \cite{An-information-theoretic-approach},
%by viewing $f_0(x)$ as a reference distribution,
the relationship between $f_0(x)$ and $g_0(x)$ can be described as
 \begin{equation}\label{gx0_fx0}
    g_0(x)= f_0(x) + \epsilon f_0^{\alpha}(x)u(x),
 \end{equation}
and the constraint condition satisfies
 \begin{equation}\label{condition.gx0_fx0}
    \int_{S_x}\epsilon f_0^{\alpha}(x)u(x) dx=0,
 \end{equation}
where $\epsilon$ and $\alpha$ are adjustable coefficients. $u(x)$ is a perturbation function of the variable $X$ in the interval $S_x$.

%\subsection{ The End-to-End Information Distance }
Then, by using the above model, the end-to-end information distance measured by
continuous MITM
%DMITM
is given as follows.
\begin{prop}\label{prop.disturbance}
For two probability distributions $g_0(x)$ and $f_0(x)$ whose relationship satisfies the conditions Eq. (\ref{gx0_fx0}) and Eq. (\ref{condition.gx0_fx0}), the information distance measured by
continuous MITM
%DMITM
is given by
\begin{equation}
 \begin{aligned}
    & D_{I}(g_0(x)||f_0(x))\\
    & = \int_{S_x} \left\{ g_0(x) e^{-g_0(x)}- f_0(x) e^{-f_0(x)} \right\} dx \\
    %& = \sum\limits_{i=0}^{\infty}\frac{(-1)^i}{i!} \int_{S_x} \Big\{ g_0^{i+1}(x)- f_0^{i+1}(x) \Big\} dx\\
%    & = \sum\limits_{i=0}^{\infty}\frac{(-1)^i}{i!} \int_{S_x} \sum\limits_{r=1}^{i+1}C_{i+1}^r [f_0(x)]^{i+1-r} [\epsilon f_0^{\alpha}(x)u(x)]^r dx\\
    & = \epsilon \sum\limits_{i=1}^{\infty}\frac{(-1)^i(i+1)}{i!}  \int_{S_x} f_0^{i+\alpha}(x)u(x) dx\\
    & + \frac{\epsilon^2}{2} \sum\limits_{i=1}^{\infty}\frac{(-1)^i(i+1)}{(i-1)!} \int_{S_x} f_0^{i-1+2\alpha}(x)u^2(x) dx + o(\epsilon^2),
 \end{aligned}
\end{equation}
where $\epsilon$ and $\alpha$ denote parameters, $u(x)$ is a function of the variable $X$ in the interval $S_x$, $|D_{I}(g_0(x)||f_0(x))| \le \int_{S_x}|\epsilon f_0^{\alpha}(x)u(x)| dx$ which satisfies the constraint Eq. (\ref{eq.Lipschitz}).
\end{prop}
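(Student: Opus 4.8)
The plan is to substitute the perturbation model $g_0(x) = f_0(x) + \epsilon f_0^{\alpha}(x) u(x)$ directly into the definition of the continuous MITM, namely $D_I(g_0\|f_0) = \int_{S_x}\{g_0(x)e^{-g_0(x)} - f_0(x)e^{-f_0(x)}\}\,dx$, and then expand the integrand as a power series in $\epsilon$. The key analytic fact I would use is the Taylor expansion of the map $t \mapsto t e^{-t}$ around $t = f_0(x)$. Writing $h(t) = te^{-t}$, one has $h^{(k)}(t) = (-1)^{k}(t - k)e^{-t}$ for $k \ge 1$ (this is the standard derivative formula for $te^{-t}$, easily checked by induction), so that the increment $h(f_0 + \Delta) - h(f_0)$ with $\Delta = \epsilon f_0^{\alpha}u$ becomes $\sum_{k\ge 1}\frac{(-1)^{k}(f_0 - k)e^{-f_0}}{k!}\Delta^{k}$. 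I would organize the computation by collecting the terms of order $\epsilon^1$ and $\epsilon^2$ separately and folding the remaining terms into $o(\epsilon^2)$.

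First I would extract the $O(\epsilon)$ contribution: the $k=1$ term of the Taylor series gives $\epsilon\int_{S_x} f_0^{\alpha}u\,(f_0 - 1)e^{-f_0}\,dx$, but I would instead keep the exponential $e^{-f_0}$ itself expanded as $\sum_{i\ge 0}\frac{(-1)^i}{i!}f_0^{i}$ so that the final answer is a double series purely in powers of $f_0$, matching the stated form. Combining the linear-in-$\Delta$ Taylor term with the series for $e^{-f_0}$ and regrouping by the total power $f_0^{i+\alpha}$, the coefficient that emerges is $\frac{(-1)^i(i+1)}{i!}$, which is exactly the first sum in the claim. Then I would do the same bookkeeping for the $\Delta^2$ Taylor term (coefficient $\frac{(-1)^2(f_0-2)e^{-f_0}}{2!} = \frac{(f_0-2)e^{-f_0}}{2}$ times $\epsilon^2 f_0^{2\alpha}u^2$); expanding $e^{-f_0}$ again and collecting powers $f_0^{i-1+2\alpha}$ produces the coefficient $\frac{(-1)^i(i+1)}{(i-1)!}$ with the prefactor $\tfrac{\epsilon^2}{2}$, giving the second sum. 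All Taylor terms with $\Delta^{k}$ for $k\ge 3$ carry a factor $\epsilon^{k}$ and are absorbed into $o(\epsilon^2)$.

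Finally I would verify the bound $|D_I(g_0(x)\|f_0(x))| \le \int_{S_x}|\epsilon f_0^{\alpha}(x)u(x)|\,dx$. This follows because $h(t) = te^{-t}$ is $1$-Lipschitz on $[0,1]$ — indeed $|h'(t)| = |(1-t)e^{-t}| \le 1$ for $t \in [0,1]$ — so $|h(g_0(x)) - h(f_0(x))| \le |g_0(x) - f_0(x)| = |\epsilon f_0^{\alpha}(x)u(x)|$ pointwise, and integrating over $S_x$ and using the triangle inequality for integrals gives the stated inequality; this is precisely the Lipschitz condition Eq. (\ref{eq.Lipschitz}) with $\|P-Q\|_1 = \int_{S_x}|g_0 - f_0|\,dx$ and $\lambda = 1$. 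The main obstacle I anticipate is purely a matter of care rather than depth: justifying the interchange of the infinite Taylor sum with the integral over $S_x$ (so that the series representation is legitimate, not merely formal), which requires that $f_0$ and $u$ be such that the dominating series $\sum_k \frac{|\Delta|^k}{k!}\sup_t|h^{(k)}(t)|$ is integrable — something that holds under mild boundedness assumptions on $u$ and on the support, and which the constraint Eq. (\ref{condition.gx0_fx0}) implicitly supports by forcing the leading normalization term to vanish. The bookkeeping of which power of $f_0$ lands in which sum is the other place to be careful, but it is mechanical once the derivative formula for $te^{-t}$ is in hand.
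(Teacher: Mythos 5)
Your derivation is correct and is essentially the argument the paper relies on: the paper states this proposition without an explicit proof, but the result is exactly the second-order Taylor expansion of $t\mapsto te^{-t}$ about $f_0$ with $e^{-f_0}$ re-expanded as a power series, where the constraint Eq.~(\ref{condition.gx0_fx0}) is what kills the $i=0$ term (coefficient $1$) so that the first sum starts at $i=1$ --- a point you should state in the main bookkeeping rather than only in passing at the end. One cosmetic fix: justify the bound via $|h'(t)|=|1-t|e^{-t}\le 1$ for all $t\ge 0$ rather than only on $[0,1]$, since densities need not be bounded by $1$.
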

In fact, Proposition \ref{prop.disturbance} describes the perturbation between $f_0(x)$ and $g_0(x)$.
%\begin{proof}
%According to the Binomial theorem, it is not difficult to see that
%\begin{equation}
% \begin{aligned}
%    & g_0^i(x)- f_0^i(x)= [f_0(x) + \epsilon f_0^{\alpha}(x)u(x)]^i-f_0^i(x)\\
%    & = \sum\limits_{r=1}^{i} C_i^r f_0^{i-r}(x) [\epsilon f_0^{\alpha}(x)u(x)]^r.
% \end{aligned}
%\end{equation}
%Then, by using Taylor series expansion of $e^{x}$, we have
%\begin{equation}\label{eq.e^g_e^f}
% \begin{aligned}
%    & e^{-g_0(x)}-e^{-f_0(x)} = \sum\limits_{i=0}^{\infty} \frac{(-1)^i}{i!} [g_0^i(x)-f_0^i(x)]\\
%    & = \sum\limits_{i=1}^{\infty}\frac{(-1)^i}{i!}  \sum\limits_{r=1}^{i} C_i^r f_0^{i-r}(x) [\epsilon f_0^{\alpha}(x)u(x)]^r\\
%    & = \epsilon \sum\limits_{i=1}^{\infty} \frac{(-1)^i}{(i-1)!} f_0^{(i-1+\alpha)}(x) u(x) \\
%    & +  \frac{\epsilon^2}{2} \sum\limits_{i=2}^{\infty}  \frac{(-1)^i}{(i-2)!} f_0^{i-2+2\alpha}(x) u^2(x) + o(\epsilon^2).\\
%    %& = \epsilon u(x) \sum\limits_{i=1}^{\infty} \frac{(-1)^i}{(i-1)!} f_0^{(i-1+\alpha)}(x) + O(\epsilon^2)
% \end{aligned}
%\end{equation}
%Therefore, by substituting Eq. (\ref{eq.e^g_e^f}) into Eq. (\ref{MID_continue}), the proof can be  readily completed.
%\end{proof}
Furthermore, we can obtain the
continuous MITM
%DMITM
between two distributions $g_1^{(u)}$ and $g_2^{(u)}$ based on the same reference distribution $f_0(x)$, which is given by
\begin{equation}
 \begin{aligned}
    & D_{I}(g_1^{(u)}(x)||g_2^{(u)}(x))\\
    %& = L(g_1^{(u)}(x))-L(g_2^{(u)}(x))\\
    & = [L(g_1^{(u)}(x))-L(f_0(x))]-[L(g_2^{(u)}(x))-L(f_0(x))]\\
    & = \epsilon \sum\limits_{i=1}^{\infty}\frac{(-1)^i(i+1)}{i!} \int_{S_x} f_0^{i+\alpha}(x)[u_1(x) - u_2(x)]dx\\
    & + \frac{\epsilon^2}{2} \sum\limits_{i=1}^{\infty}\frac{(-1)^i(i+1)}{(i-1)!} \int_{S_x} f_0^{i-1+2\alpha}(x) [u_1^2(x)-u_2^2(x)] dx\\
    & + o(\epsilon^2),
 \end{aligned}
\end{equation}
where the $\epsilon$ and $\alpha$ are parameters, $u_1(x)$ and $u_2(x)$ are functions of the variable $X$,
% in the interval $S_x$
and
%\begin{subequations}
\begin{equation}
    g_1^{(u)}(x)= f_0(x) + \epsilon f_0^{\alpha}(x)u_1(x), \quad \forall x\in S_x,
\end{equation}
\begin{equation}
    g_2^{(u)}(x)= f_0(x) + \epsilon f_0^{\alpha}(x)u_2(x), \quad \forall x\in S_x,
\end{equation}
%\end{subequations}
with the constraint $|D_{I}(g_1(x)||g_2(x))| \le \int_{S_x}|\epsilon f_0^{\alpha}(x)\{u_{1}(x)-u_{2}(x)\}| dx$.

%In terms of the information distance measured by the MID,
%the message importance distinction can be characterized in the observation model mentioned in Eq. (\ref{gx0_fx0}). In particular,
It is apparent that
when the parameter $\epsilon$ is small enough, the
continuous MITM
is convergent with the order of $O(\epsilon)$. Actually, this provides a way to apply
the continuous MITM
to measure the variantion of message importance,
if the system does not have relatively large change.

\section{ Application in Mobile Edge Computing with the M/M/s/k queue}
%=====================================================
%\begin{figure}[!tp]
%\centering
%\includegraphics[width=3.7in]{MEC_system.eps}
%\caption{The model of mobile edge computing system. }
%\label{fig_MEC_system}
%\end{figure}
%As for mobile devices, almost all of them have few computing resources and just depend on cloud computing, which can not satisfy the low delay requirement of the future applications [3].
%% [3]Collaborative Cache Allocation and Computation Offloading in Mobile Edge Computing
%To cope with the issue, Mobile Edge Computing (MEC) is proposed to improve cloud computing.
%To characterize the MEC more specifically, a MEC model is constructed based on the queuing theory as follows.
Consider the MEC system that consists of numerous mobile users, an edge server, and a central cloud. % as shown in Fig. \ref{fig_MEC_system}.
%The queue models for mobile users, the edge server and the central cloud are regarded as a M/M/1 queue, a M/M/s queue, and a M/M/$\infty$ queue, respectively.
The queue model on the edge server can be considered as the M/M/s/k queue, where the first and the second $M$ denote the request interarrival time of mobile users and service request time in the edge server respectively, and both of them follow exponential distribution; $s$ is the parallel processing core number; $k$ denotes the
queuing buffer size
\cite{multi-objective-optimization-for-computation-offloading}.

In order to save resources of system, we now consider a more complicated M/M/s/k model which has the request lose depending on the queue length, namely the real arrival rate satisfies $\tilde \lambda_j=\tilde \lambda \cdot h_j$ ($\tilde \lambda$ is the original arrival rate and the parameter $h_j= \frac{1}{1+j}$ depends on the queue length $j$) \cite{An-explicit-solution,On-multiserver-feedback}.
In fact, the state probability of this queue model is derived from the stationary process, namely a dynamic equilibrium based on birth and death process.
In this case, we can obtain the steady queue state probability $p_{k,j}$ ($j=0,..., s+k$) as follows
%Note that, in the M/M/s/k queue, the queue state probability $p_{k,j}$ ($j=0,..., s+k$) are given by
\begin{equation}\label{eq.p_0}
    p_{k,0}= \Big[ \sum\limits_{j=0}^{s-1} \frac{a^j}{j!j!} + \frac{a^s}{s!} \cdot \sum\limits_{j=s}^{s+k} \frac{\rho^{j-s} }{j!} \Big]^{-1}, \ \
\end{equation}
\begin{equation}\label{eq.p_k_j}
    p_{k,j} = \frac{a^j}{j!j!}p_{k,0}, \quad  (0<j<s),\qquad \quad \ \
\end{equation}
\begin{equation}\label{eq.p_k_s}
    p_{k,j} = \frac{a^s}{s!j!} \rho^{j-s} p_{k,0}, \quad ( s \le j \le s+k),
\end{equation}
where
$s$ is the number of servers, $k$ is the buffer or caching size, the traffic intensity $\rho= a /s$ as well as $a= \tilde \lambda / \tilde \mu$ ($\tilde \lambda$ and $\tilde \mu$ are the original arrival rate and service rate respectively).

As for the MITM,
it can be used to distinguish the state probability distributions in the above M/M/s model.
By use of Taylor series expansion, the approximate MIM is given by
\begin{equation}\label{eq.MIM_MMsk}
\begin{aligned}
    & \sum\limits_{j=0}^{s+k} p_{k,j} e^{-p_{k,j}}
     = \sum\limits_{j=0}^{s+k} p_{k,j}[1-p_{k,j} + O(p_{k,j}^2)]\\
    & \doteq 1 -  p_{k,0}^2 \bigg\{ \sum\limits_{j=0}^{s-1} {(\frac{a^j}{j!j!})^2}+ (\frac{a^s}{s!})^2 \sum\limits_{j=s}^{s+k} (\frac{\rho^{j-s}}{j!})^2  \bigg\}. \\
\end{aligned}
\end{equation}
Then, referring to Eq. (\ref{eq.MIM_MMsk}), we can use MITM to characterize the message importance gap for the M/M/s model as follows.
\begin{prop}\label{pro.MID_queue}
As for the M/M/s model mentioned in Eq. (\ref{eq.p_0})-(\ref{eq.p_k_s}), the information difference between two queue state probability distributions $P_{k}= \{{p}_{k,0}, {p}_{k,1}, ..., {p}_{k,s+k}, 0, 0,..., 0\}$ and $P_{k+1} = \{{p}_{k+1,0},$ ${p}_{k+1,1}, ..., {p}_{k+1,s+k+1}, 0,..., 0\}$ with buffer size $k$ and $k+1$ respectively, can be measured by MITM as
\begin{equation}\label{MID_queueing}
 \begin{aligned}
    & D_I(P_{k+1}||P_{k})\\
    & = \sum\limits_{j=0}^{s+k+1} {p}_{k+1,j} e^{-{p}_{k+1,j}} - \sum\limits_{j=0}^{s+k}{p}_{k,j} e^{-{p}_{k,j}} \\
    & \doteq \Big\{ \frac{1}{(\varphi_1 + \varphi_2 \sum\limits_{j=s}^{s+k}\frac{\rho^{j-s}}{j!})^2}
    - \frac{1}{(\varphi_1 + \varphi_2 \sum\limits_{j=s}^{s+k+1}\frac{\rho^{j-s}}{j!})^2} \Big\} \\
    & \quad \cdot \Big\{ \sum\limits_{j=0}^{s-1} {(\frac{a^j}{j!j!})^2}
    + \varphi_2^2 \sum\limits_{j=s}^{s+k} (\frac{\rho^{j-s}}{j!})^2 \Big\} \\
    & \quad - \frac{ \varphi_2^2 \rho^{2k+2} }{ [(s+k+1)!]^2
    \big(\varphi_1^2+ \varphi_2^2\sum\limits_{j=s}^{s+k} \frac{\rho^{j-s}}{j!}\big) },
%    & + \frac{\varphi_2^2}{1-\rho^2} [ \frac{1-\rho^{2(k+1)}}{(\varphi_1 + \varphi_2 \frac{1-\rho^{k+1}}{1-\rho})^2} - \frac{1-\rho^{2(k+2)}}{(\varphi_1 + \varphi_2 \frac{1-\rho^{k+2}}{1-\rho})^2} ],
 \end{aligned}
\end{equation}
where $p_{k,j}$ and $ p_{k+1,j}$ are queue state probability in the M/M/s/{k} and M/M/s/{k+1} models with the constraint $|D_I(P_{k+1}||P_{k})| \le \lambda\|P_{k+1}-P_{k}\|_{1}$,
%whose state distributions are $P_{k+1} = \{{p}_{k+1,0}, {p}_{k+1,1}, ..., {p}_{k+1,k+1}, 0,..., 0\}$ and $P_{k}= \{{p}_{k,0}, {p}_{k,1}, ..., {p}_{k,k}, 0, 0,..., 0\}$ respectively,
as well as the parameter $\varphi_1$ and $\varphi_2$ are given by $ \varphi_1= \sum_{j=0}^{s-1} {a^j}/{(j!j!)}$ and $\varphi_2= {a^s}/{s!}$.
\end{prop}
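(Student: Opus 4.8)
The plan is to start from the definition of the MITM in Eq.~(\ref{MID_discrete}) applied to the pair $(P_{k+1},P_{k})$, insert the second-order Taylor expansion already recorded in Eq.~(\ref{eq.MIM_MMsk}), and then substitute the closed-form steady-state probabilities~(\ref{eq.p_0})--(\ref{eq.p_k_s}). First I would observe that the zero-padding entries of $P_{k}$ and $P_{k+1}$ contribute $0\cdot e^{0}=0$, so that only the genuine queue states enter each sum. Writing $xe^{-x}=x-x^{2}+o(x^{2})$ and using the normalizations $\sum_{j}p_{k,j}=\sum_{j}p_{k+1,j}=1$, the linear parts of the two sums each equal $1$ and cancel in the difference, which leaves
\[
D_{I}(P_{k+1}\|P_{k})\doteq \sum_{j=0}^{s+k}p_{k,j}^{2}-\sum_{j=0}^{s+k+1}p_{k+1,j}^{2}.
\]

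Next I would evaluate the two quadratic sums using~(\ref{eq.p_k_j})--(\ref{eq.p_k_s}): for $0\le j\le s-1$ one has $p_{k,j}=\frac{a^{j}}{j!j!}p_{k,0}$ and for $s\le j\le s+k$ one has $p_{k,j}=\varphi_{2}\frac{\rho^{j-s}}{j!}p_{k,0}$, hence
\[
\sum_{j=0}^{s+k}p_{k,j}^{2}=p_{k,0}^{2}\Big\{\sum_{j=0}^{s-1}\big(\tfrac{a^{j}}{j!j!}\big)^{2}+\varphi_{2}^{2}\sum_{j=s}^{s+k}\big(\tfrac{\rho^{j-s}}{j!}\big)^{2}\Big\},
\]
with $p_{k,0}=\big(\varphi_{1}+\varphi_{2}\sum_{j=s}^{s+k}\rho^{j-s}/j!\big)^{-1}$ from~(\ref{eq.p_0}), and the analogous identity for $P_{k+1}$ whose outer sum runs up to $s+k+1$. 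Substituting these turns $D_{I}(P_{k+1}\|P_{k})$ into a difference of two explicit rational functions of $a,\rho,s,k$.

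The remaining work is algebraic rearrangement. Writing $\Sigma_{k}:=\sum_{j=0}^{s-1}(a^{j}/(j!j!))^{2}+\varphi_{2}^{2}\sum_{j=s}^{s+k}(\rho^{j-s}/j!)^{2}$ and noting that the quadratic sum for $P_{k+1}$ has exactly one extra term, $\varphi_{2}^{2}(\rho^{k+1}/(s+k+1)!)^{2}=\varphi_{2}^{2}\rho^{2k+2}/[(s+k+1)!]^{2}$, I would split
\[
\Sigma_{k}\,p_{k,0}^{2}-\Big(\Sigma_{k}+\tfrac{\varphi_{2}^{2}\rho^{2k+2}}{[(s+k+1)!]^{2}}\Big)p_{k+1,0}^{2}=\Sigma_{k}\big(p_{k,0}^{2}-p_{k+1,0}^{2}\big)-\frac{\varphi_{2}^{2}\rho^{2k+2}}{[(s+k+1)!]^{2}}\,p_{k+1,0}^{2},
\]
and then re-express $p_{k,0}^{2}$, $p_{k+1,0}^{2}$ in terms of $\varphi_{1},\varphi_{2}$ and the partial sums of $\rho^{j-s}/j!$; the first bracketed factor collapses to $1/(\varphi_{1}+\varphi_{2}\sum_{j=s}^{s+k}\rho^{j-s}/j!)^{2}-1/(\varphi_{1}+\varphi_{2}\sum_{j=s}^{s+k+1}\rho^{j-s}/j!)^{2}$ and the last term becomes the claimed residual, which is exactly the three-piece expression in~(\ref{MID_queueing}) up to the mild simplification of the denominator $(p_{k+1,0})^{-2}$ used there.

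The main obstacle is bookkeeping rather than any single sharp estimate: one has to peel off the single additional state $s+k+1$ correctly so that the leading factor becomes the clean difference above, and keep track of which monomials in the Taylor series actually survive. Finally, the constraint $|D_{I}(P_{k+1}\|P_{k})|\le\lambda\|P_{k+1}-P_{k}\|_{1}$ is immediate because the map $x\mapsto xe^{-x}$ is $1$-Lipschitz on $[0,1]$ (its derivative $e^{-x}(1-x)$ satisfies $|e^{-x}(1-x)|\le 1$ there), so term by term $|p_{k+1,j}e^{-p_{k+1,j}}-p_{k,j}e^{-p_{k,j}}|\le|p_{k+1,j}-p_{k,j}|$ and any $\lambda\ge 1$ works; and the symbol $\doteq$ throughout refers precisely to the truncation $xe^{-x}\approx x-x^{2}$, which is accurate when the state probabilities are small, i.e.\ in the large $s+k$ regime.
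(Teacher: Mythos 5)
Your proposal is correct and takes essentially the same route as the paper: the paper's own (sketched) proof is exactly the second-order Taylor truncation $xe^{-x}\approx x-x^{2}$ recorded in Eq.~(\ref{eq.MIM_MMsk}), after which the linear parts cancel by normalization and the steady-state probabilities (\ref{eq.p_0})--(\ref{eq.p_k_s}) are substituted into the difference of quadratic sums and rearranged just as you do, peeling off the single extra state $j=s+k+1$. The only divergence is in the residual term, where your algebra yields the denominator $[(s+k+1)!]^{2}\bigl(\varphi_1+\varphi_2\sum_{j=s}^{s+k+1}\rho^{j-s}/j!\bigr)^{2}=[(s+k+1)!]^{2}p_{k+1,0}^{-2}$ while the printed proposition has $[(s+k+1)!]^{2}\bigl(\varphi_1^{2}+\varphi_2^{2}\sum_{j=s}^{s+k}\rho^{j-s}/j!\bigr)$; your form is the one the derivation actually produces (the paper's appears to be a typographical slip), and your added verification of the Lipschitz constraint via the $1$-Lipschitz property of $x\mapsto xe^{-x}$ is a sound supplement the paper omits.
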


Similarly, it is not difficult to derive the MITM between the queue state probability distributions $P_{\infty} = \{ {p}_{\infty,0}, {p}_{\infty,1}, ...,$ ${p}_{\infty,\infty} \}$ and $P_{k}= \{{p}_{k,0}, {p}_{k,1}, ..., {p}_{k,s+k}, 0, 0,..., 0\}$ with buffer size $\infty$ and $k$, which is given by
\begin{equation}\label{MID_queueing_infty}
 \begin{aligned}
    & D_I(P_{\infty}|| P_{k}) \\
%    & = [(p_{k,0})^2-(p_{\infty,0})^2]
%    \Big\{ \sum\limits_{j=0}^{s-1} {(\frac{a^j}{j!j!})^2}
%    + \varphi_2^2 \sum\limits_{j=s}^{s+k} (\frac{\rho^{j-s}}{j!})^2 \Big\}\\
%    & \quad - p_{\infty,0}^2 \big(\frac{a^s}{s!}\big)^2
%    \sum\limits_{j=s+k+1}^{\infty} \big(\frac{\rho^{j-s}}{j!}\big)^2\\
    & \doteq \Big\{ \frac{1}{(\varphi_1 + \varphi_2 \sum\limits_{j=s}^{s+k}\frac{\rho^{j-s}}{j!})^2}
    - \frac{1}{\big[ \varphi_1 + \varphi_2 (\frac{e^{\rho}}{\rho^s}
    - \sum\limits_{j=0}^{s-1}\frac{\rho^{j-s}}{j!}) \big]^2} \Big\}\\
    & \quad \cdot \Big\{ \sum\limits_{j=0}^{s-1} {(\frac{a^j}{j!j!})^2}
    + \varphi_2^2 \sum\limits_{j=s}^{s+k} (\frac{\rho^{j-s}}{j!})^2 \Big\}\\
    & \quad - \frac{ \varphi_2^2 \Big(\frac{e^{\rho}}{\rho^s}
    - \sum\limits_{j=0}^{s+k}\frac{\rho^{j-s}}{j!}\Big)  }{ \Big[ \varphi_1 + \varphi_2 (\frac{e^{\rho}}{\rho^s}
    - \sum\limits_{j=0}^{s-1}\frac{\rho^{j-s}}{j!}) \Big]^2}.
 \end{aligned}
\end{equation}

%Especially, when the number of server is one, that is, the queue model is M/M/1/k model, it is readily seen that
%\begin{equation}
%\begin{aligned}
%   & D_{I(s=1)}(P_{\infty}|| P_{k}) \\
%   & = \big\{ \frac{1}{(1+a\sum\limits_{j=1}^{k+1}\frac{a^{j-s}}{j!})^2}
%   - \frac{1}{(1+e^a-a)^2}\big\} \big\{ 1+a^2\sum\limits_{j=1}^{k+1} (\frac{a^{j-1}}{j!})^2 \big\} \\
%   & \quad -  \frac{ae^a- a^2\sum\limits_{j=0}^{k+1}\frac{a^{j-1}}{j!}}{(1+e^a-a)^2},
%\end{aligned}
%\end{equation}
%where $D_{I(s=1)}(P_{\infty}|| P_{k})$ denotes the MITM with $s=1$.
Moreover, for the queue length selection, it is required that the distinction between two distribution $P_{\infty}$ and $P_{k}$ should be small enough, namely, $|D_{I}(P_{\infty}|| P_{k})| \le \epsilon $ ($\epsilon$ is a small parameter). Since that the lower bound of buffer size is complicated, we have a looser lower bound as follows
\begin{equation}
\begin{aligned}
    & k \ge
     \frac{ \ln \Big\{ 1-\frac{1-\rho}{\varphi_2} \big[ ({\varphi}/{\sum\limits_{j=0}^{s-1} (\frac{a^j}{j!j!})^2})^{-\frac{1}{2}} -\varphi_1 \big] \Big\} }{\ln \rho} -1,
\end{aligned}
\end{equation}
where the parameter $\varphi$ is given by
\begin{equation}
\begin{aligned}
   \varphi= \epsilon +{\sum\limits_{j=0}^{s-1}(\frac{a^j}{j!j!})^2+\frac{\varphi_2^2 e^{\rho}}{\rho^s} }{\Big[ \varphi_1 + \varphi_2 (\frac{e^{\rho}}{\rho^s}
        - \sum\limits_{j=0}^{s-1}\frac{\rho^{j-s}}{j!}) \Big]^{-2} }.
\end{aligned}
\end{equation}
It is easy to see that $\epsilon$ plays a key role in the caching size selection when using finite size caching to imitate the infinite caching working mode.

Similar to MITM, the KL divergence between the queue state probability distributions with buffer size $k+1$ and $k$ is given by
\begin{equation}
 \begin{aligned}
    & D(P_{k} || P_{k+1})\\
    & = \sum\limits_{j} p_{k,j} \log \frac{1}{p_{k+1,j}} - \sum\limits_{j} p_{k,j} \log \frac{1}{p_{k,j}}\\
    & = \log \Big\{ 1+ \frac{  \rho^{k+1}}{(s+k+1)!\big(\varphi_1+\varphi_2 \sum\limits_{j=s}^{s+k} \frac{\rho^{j-s}}{j!}\big)} \Big\},
 \end{aligned}
\end{equation}
where the parameters $p_{k,j}$, $p_{k+1,j}$, $\varphi_1$ and $\varphi_2$ are the same as them in Proposition \ref{pro.MID_queue}.

Likewise, we can derive the KL divergence between the queue state distributions with buffer size $\infty$ and $k$ as
\begin{equation}
 \begin{aligned}
    D(P_{k}||P_{\infty})
    & = \log  \frac{ \varphi_1 + \varphi_2 (\frac{e^{\rho}}{\rho^s}- \sum_{j=0}^{s-1} \frac{\rho^{j-s}}{j!})}{ \varphi_1+ \varphi_2 \sum_{j=s}^{s+k}\frac{\rho^{j-s}}{j!}  }.
 \end{aligned}
\end{equation}

For the queue length selection with KL divergence, we have a looser lower bound of buffer size as follows
\begin{equation}
 \begin{aligned}
    k \ge \frac{\ln \Big\{ 1- \frac{(1-\rho)}{ 2^{\epsilon} \varphi_2}\big[ \varphi_1(1-2^{\epsilon})
    + \varphi_2(\frac{e^{\rho}}{\rho^s} - \sum\limits_{j=0}^{s-1} \frac{\rho^{j-s}}{j!})
     \big] \Big\}}{\ln \rho} -1.
 \end{aligned}
\end{equation}
%Therefore, by using the information measures such as MID and KL divergence, it may provide a effective method to select the queue length, which can exploit the resources of MEC more reasonably.

To validate our derived results in theory, some simulations are presented.
The events arrivals are listed
in Table \ref{table_arrival}.
%an exponential distribution, a uniform distribution, or a normal distribution.
It is readily seen that they have the same average interarrival time as $1/\tilde \lambda_{j,0}$.
Besides, the traffic intensity is selected as $\rho = 0.9$ in all discussed cases.

\begin{table}[!t]
% increase table row spacing, adjust to taste
\renewcommand{\arraystretch}{1.3}
% if using array.sty, it might be a good idea to tweak the value of
% \extrarowheight as needed to properly center the text within the cells
\caption{The Interarrival Time Distributions of Events' Arrivals}
\label{table_arrival}
\newcommand{\tabincell}[2]{\begin{tabular}{@{}#1@{}}#2\end{tabular}}
\centering
% Some packages, such as MDW tools, offer better commands for making tables
% than the plain LaTeX2e tabular which is used here.
\begin{tabular}{|c|c|c|c|}
\hline
\tabincell{l} {
Type of \\
Distribution} &
\tabincell{l} {
Exponential \\
Distribution } &
\tabincell{l} {
Uniform \\
Distribution} &
\tabincell{l} {
Normal \\
Distribution }
\\
\hline
$P(X)$ &
$X \sim E({\tilde \lambda_{j,0}})$ &
$X \sim U(0,2/{\tilde \lambda_{j,0}})$ &
$X \sim N(\frac{1}{\tilde \lambda_{j,0}}, \frac{1}{{\tilde \lambda_{j,0}}^2} )$
%$p(x)$ &
%\tabincell{l} {
%$p(x)={\tilde \lambda_0} e^{-{\tilde \lambda_0}x}$, \\
%$x\in (0, \infty)$}  &
%\tabincell{l} {
%$p(x)={\tilde \lambda_0}/2$, \\
%$x \in (0, 2/{\tilde \lambda_0})$ } &
%\tabincell{l} {
%$p(x)=$
%}
\\
\hline
\end{tabular}
\end{table}

%\begin{figure}[!tp]
%\centering
%\includegraphics[width=3.7in]{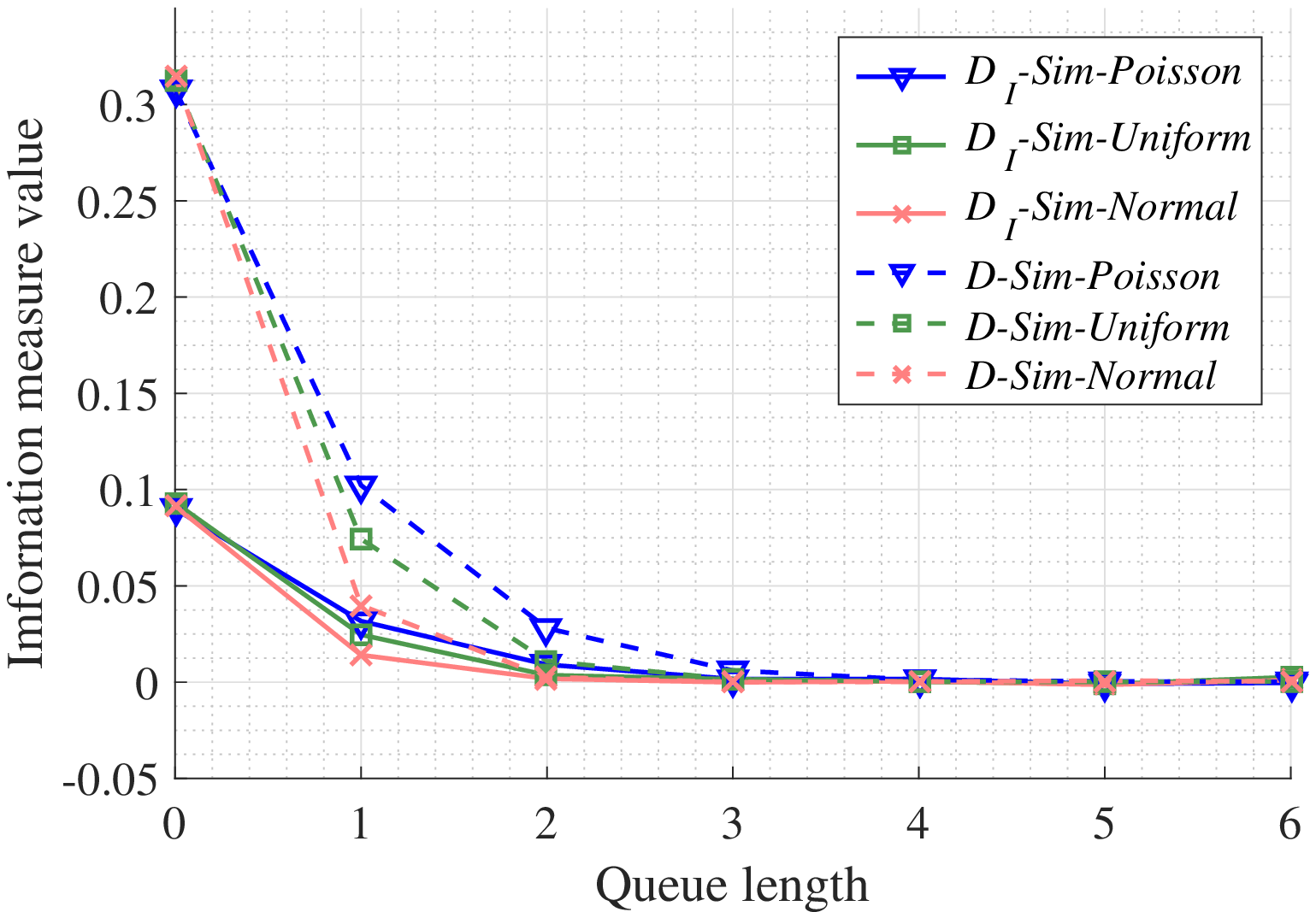}
%\caption{The performance of information measures for the state variation between the queue length $k$ and $k+1$ in the case of different arrival events distributions. }
%\label{fig_kk_arrival}
%\end{figure}

\begin{figure}[!tp]
\centering
\subfigure[The simulation and theoretical results of information measures]{\includegraphics[width=3.0in]{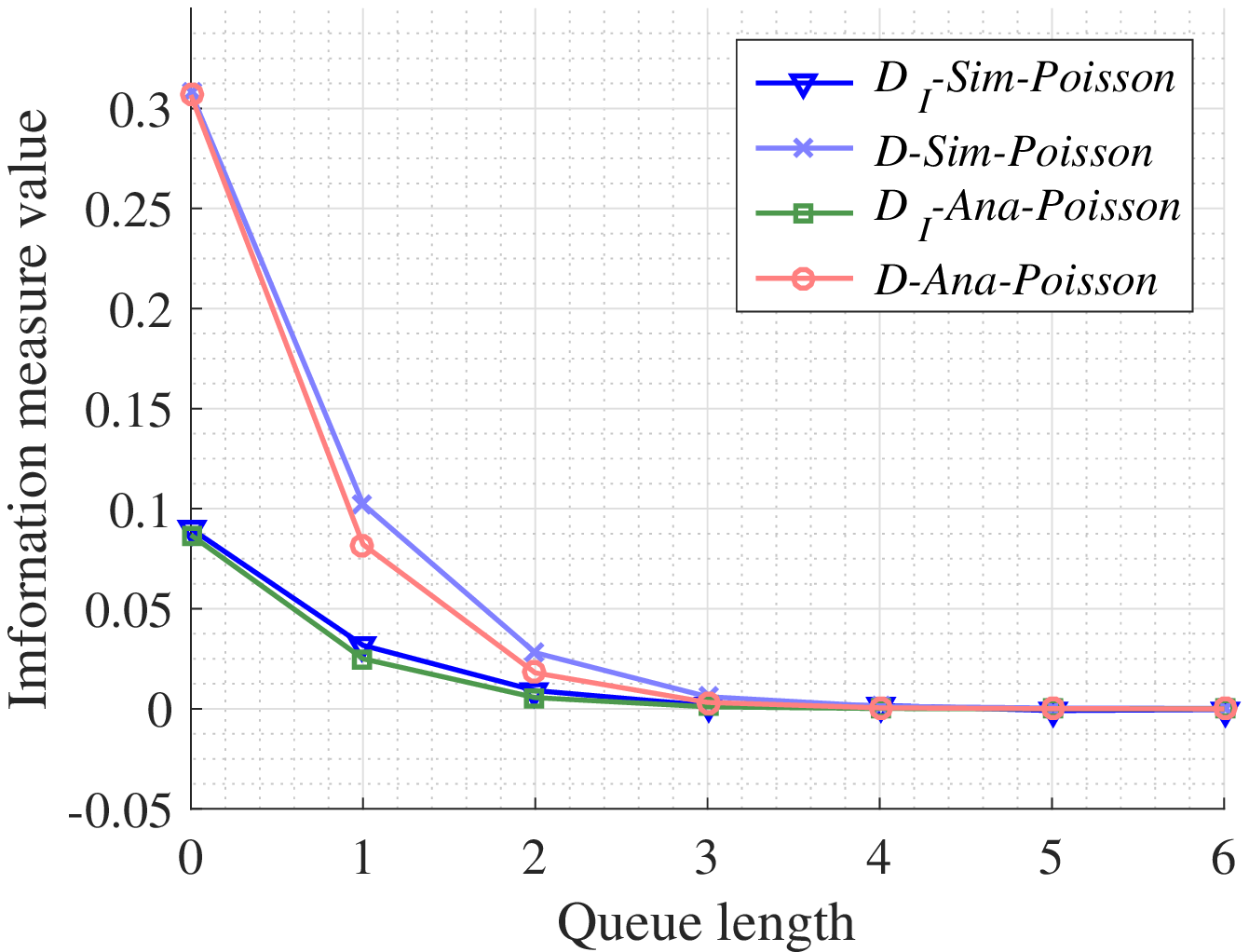}%
\label{fig_kk_arrival_ana}}
\hfil
\subfigure[Information measures for different arrival events distributions]{\includegraphics[width=3.0in]{kk_arrival.eps}%
\label{fig_kk_arrival}}
\caption{The performance of information measures for the state variation between the queue length $k$ and $k+1$ in the case of server number $s=1$. }
\label{fig_kk_performance}
\end{figure}

In Fig. \ref{fig_kk_performance} and \ref{fig_inftyk_performance}, the legends $D_I$-$Sim$, $D_I$-$Ana$ and $D$-$Sim$, $D$-$Ana$ denote the simulation results and the analytical results for MITM and KL divergence, respectively. It is illustrated that the convergence of MITM is faster than that of KL divergence, which indicates that MITM may provide a reasonable lower bound to select the caching size for MEC.
In addition, we can see that the Poisson distribution corresponds the worst case for the arrival process among the three discussed cases with respect to the convergence of both MITM and KL divergence.

\section{Conclusion}
In this paper, we investigated the information transfer problem in big data and proposed an information measure, i.e., MITM. %, similar to K-L divergence.
Furthermore, this information measure has its own dramatic characteristics on paying more attention to the message importance hidden in big data.
This makes the information measure as a promising tool for information transfer measure in big data.
We presented the message importance transfer capacity measured by the MITM which can give an upper bound for the information transfer with disturbance.
%As well, we discussed how to exploit the capacity in communication with bandlimited channel.
Furthermore, the MITM was extended to the continuous case to investigate the variation of message importance in the information transfer process.
In addition,
we employed the MITM to discuss the caching size selection in the MEC.
% conference papers do not normally have an appendix

%\begin{figure}[!tp]
%\centering
%\includegraphics[width=3.7in]{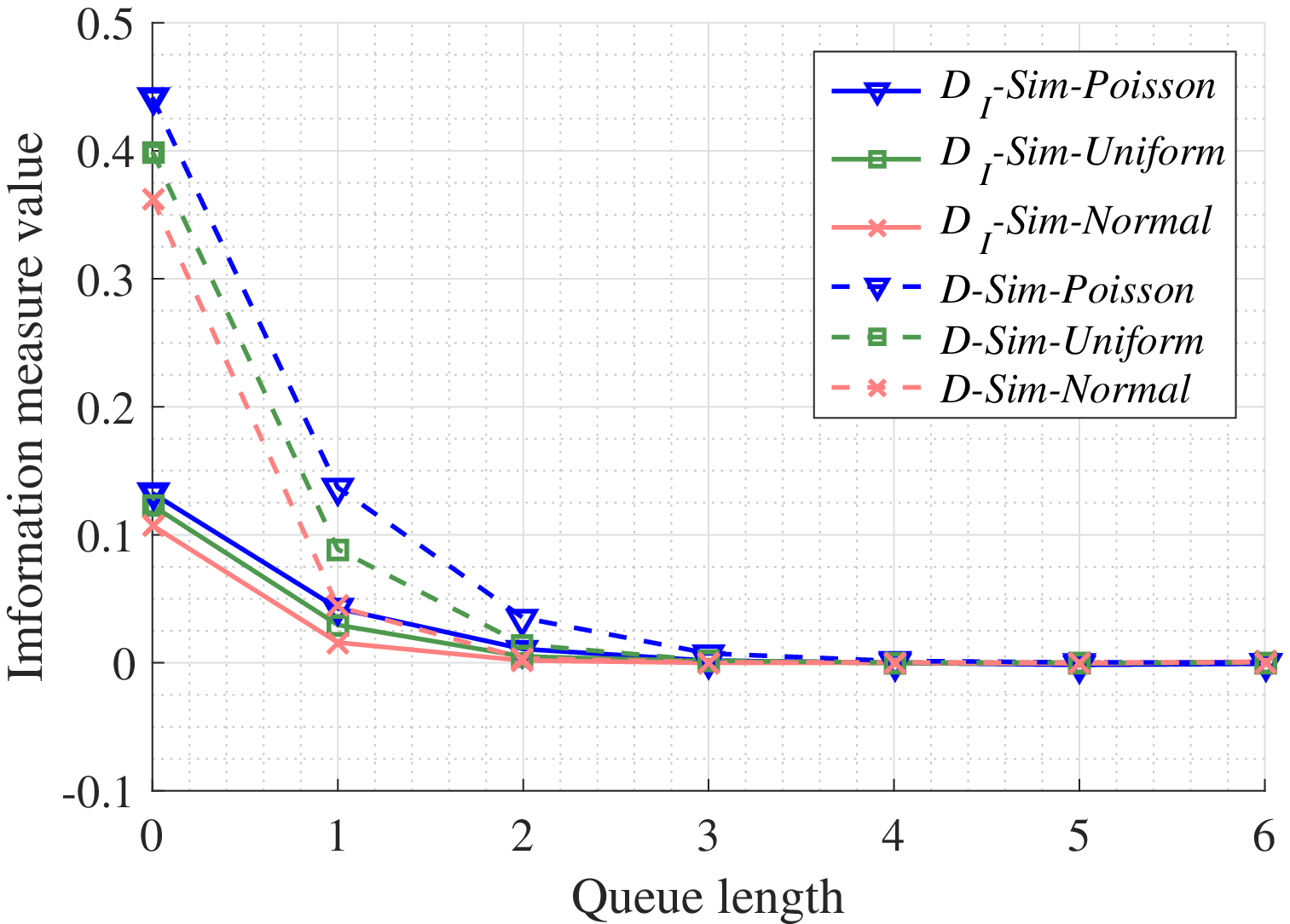}
%\caption{The performance of information measures for the state variation between the queue length $k$ and $\infty$ in the case of different arrival events distributions. }
%\label{fig_inftyk_arrival}
%\end{figure}

\begin{figure}[!tp]
\centering
\subfigure[The simulation and theoretical results of information measures]{\includegraphics[width=3.0in]{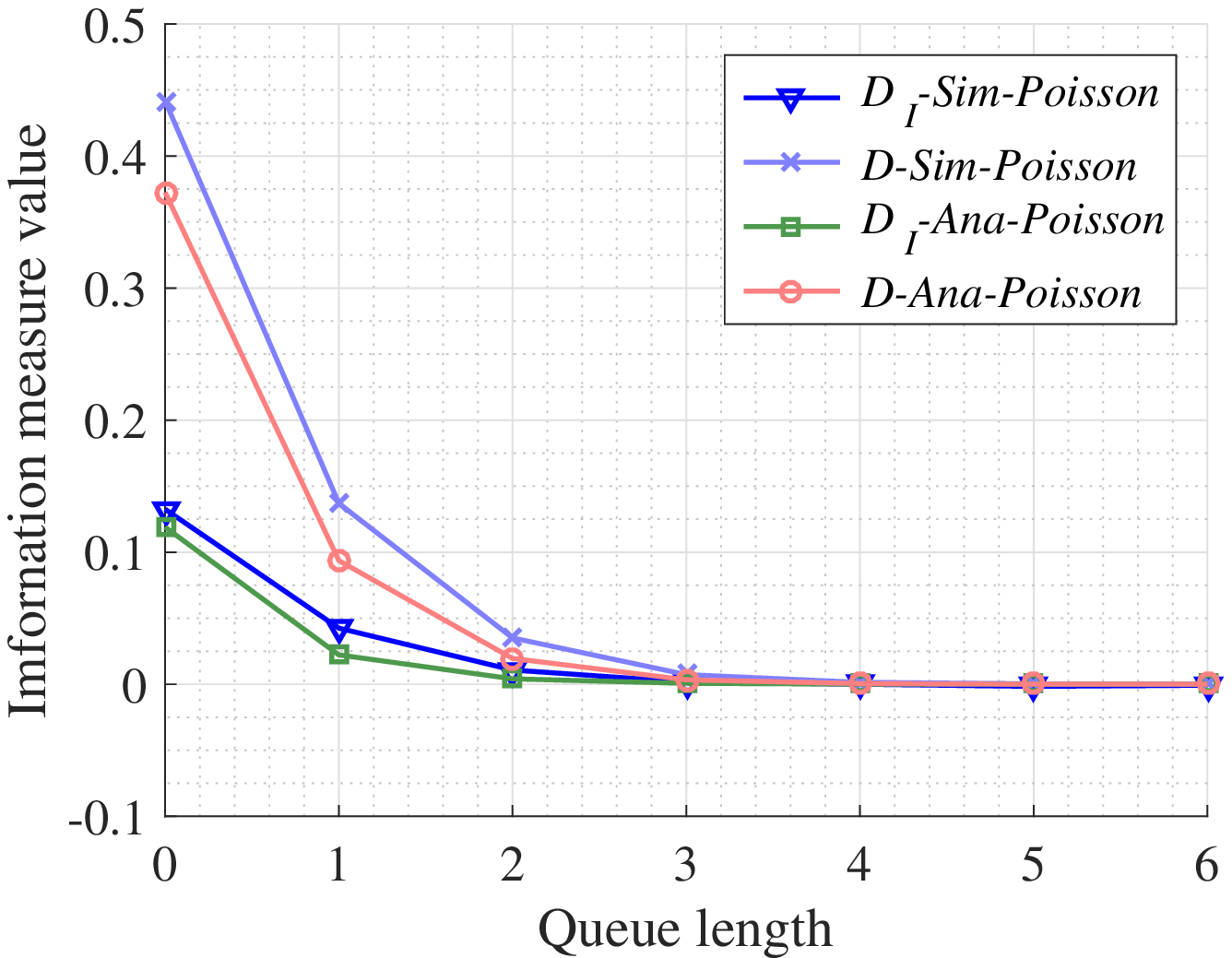}%
\label{fig_inftyk_arrival_ana}}
\hfil
\subfigure[Information measures for different arrival events distributions]{\includegraphics[width=3.0in]{inftyk_arrival.eps}%
\label{fig_inftyk_arrival}}
\caption{The performance of information measures for the state variation between the queue length $k$ and $\infty$ in the case of server number $s=1$. }
\label{fig_inftyk_performance}
\end{figure}

%\section*{Acknowledgment}
%
%We indeed appreciate the support of the China Major State
%Basic Research Development Program (973 Program) No.2012CB316100(2), and the National Natural Science Foundation of China (NSFC) No. 61771283.

\end{document}